\newtheorem{theorem}{\bf Theorem}
\newtheorem{lemma}[theorem]{\bf Lemma}
\newtheorem{proposition}[theorem]{\bf Proposition}
\newtheorem{corollary}[theorem]{\bf Corollary}
\newtheorem{remark}[theorem]{\bf Remark}
\newtheorem{definition}[theorem]{\bf Definition}
\newtheorem{example}[theorem]{\bf Example}
\begin{document}
	\title{\bf A Note on Vectorial Boolean Functions as Embeddings}
	
	\author{Augustine Musukwa$^1$ and Massimiliano Sala$^2$}
	\date{}
	\maketitle  
	
	\begin{center}{$^1$Mzuzu University, P/Bag 201, Mzuzu 2, Malawi\\ $^{1,2}$University of Trento, Via Sommarive, 14, 38123 Povo, Trento, Italy \\$^1$augustinemusukwa@gmail.com\\ $^2$maxsalacodes@gmail.com } \end{center}            
	\begin{abstract}
		\noindent Let \( F \) be a vectorial Boolean function from \( \mathbb{F}_2^n \) to \( \mathbb{F}_2^m \), with \( m \geq  n \). We define \( F \) as an embedding if \( F \) is injective. In this paper, we examine the component functions of \( F \), focusing on constant and balanced components. Our findings reveal that at most \( 2^m - 2^{m-n} \) components of \( F \) can be balanced, and this maximum is achieved precisely when \( F \) is an embedding, with the remaining \( 2^{m-n} \) components being constants. Additionally, for partially-bent embeddings, we demonstrate that there are always at least \( 2^n - 1 \) balanced components when \( n \) is even, and \( 2^{m-1} + 2^{n-1} - 1 \) balanced components when \( n \) is odd. A relation with APN functions is shown.
	\end{abstract}
	
	\noindent {\bf Keywords:} vectorial Boolean functions; component functions; balanced functions; injective functions; embeddings; APN functions\\[.2cm]
	\noindent {\bf MSC 2010:} 06E30, 94A60, 14G50
	
	\section{Introduction}
	Vectorial Boolean functions play a fundamental role in cryptography, particularly in the design of block ciphers and stream ciphers \cite{Car1}. They are used as S-boxes (substitution boxes) to provide non-linearity and resistance against linear and differential cryptanalysis. Strong cryptographic properties such as high nonlinearity, differential uniformity, algebraic immunity, and correlation immunity make these functions essential for securing encryption algorithms.
	
	In this note, we study vectorial Boolean functions mapping from a lower-dimensional vector space to a higher-dimensional vector space. Although few papers have been published on these functions, some studies have suggested their possible application in ciphers. Recently, some researchers have considered these functions, with a focus on studying APN functions that satisfy the so-called D-property (see, for example, \cite{Abbo,Tani}). These functions were also partially studied in \cite{Arag}, where the authors examined certain cryptographic properties of Feistel networks using functions from $\mathbb{F}_2^n$ to $\mathbb{F}_2^m$ as S-boxes. Specifically, they provided an example of a cipher using an injective APN function from 4 bits to 5 bits as an S-box. This suggests further investigation of injective maps, which in this paper we call "embeddings." 
	
	In our study of these embeddings, we are also motivated by two more reasons. The first is of a geometric type. If we consider an affine map $F$ between two vector spaces $\mathbb{F}_2^n$ and $\mathbb{F}_2^m$, with $m\geq n$, $F$ will have the property that all components are either balanced or constant (we call such an $F$ "affine-like" in this introduction). There is no hope that generic vectorial Boolean functions share this property, but it is well-known that bijections do, since all their components are balanced except for the trivial zero component. Along the same reasoning, we try to understand when a function has many balanced components (which may be seen as a measure of "balancedness"). Since bijections have a huge number of balanced components, we attempt to identify a more general property that guarantees high balancedness.
	We do that by looking at embeddings and indeed we present in this paper some results along this line. First, whatever the degree of the embedding $F$, if the image happens to be a subspace then $F$ is affine-like (Theorem \ref{balanced-ImageSpace}). Second, if $F$ is quadratic (or partially-bent), we can show the presence of a huge number of balanced components (especially when $n$ is odd), even if others can clearly be anything (Theorem~\ref{least-balanced-component}). Finally, coming back to embeddings of any degree, 
	$F$ is affine-like whenever
	$F$ has the highest possible number (permitted by the space dimensions $m,n$)
	of balanced components
	(Corollary \ref{balanced-components}) or of constant components
	(Corollary \ref{constant-components}).
	
	The last motivation is related to the theory of APN functions. Although cubic APN permutations in odd dimension are known, none is known in even dimension. Certainly, there is none for $n=4$ (as experimentally shown in \cite{Hou} and formally proved in \cite{Cal})
	and for $n=6$ (as experimentally shown in \cite{Lan}).
	The reason for this scarcity is still unclear. For APN functions, suitable restrictions of their derivatives are embeddings. We show in Section \ref{appl} that, for such restrictions, at least half of their components must be balanced in the odd case, while in the even case a surprising three quarters of them must be balanced.
	This significant difference
	might help explain the scarcity of their permutation classes.
	
	This paper consists of six sections and is structured as follows. Section \ref{preliminaries} provides an overview of some known results and essential notations for our study. In Section \ref{preliminary-results}, we present some preliminary results, with a primary focus on the weight of first-order derivatives of any Boolean function. Section \ref{main-results} presents a study of some quantities for vBfs from $\mathbb{F}_2^n$ to $\mathbb{F}_2^m$. In Section \ref{main-results-1}, we focus on the case $m\geq n$. Here, we investigate the balancedness of vBfs, with a particular emphasis on injective functions, and we identify cases when they are affine-like. Finally,  an application to cubic APN functions is given in Section \ref{appl}.

	\section{Preliminaries}\label{preliminaries}
	In this section, we provide some definitions, known results, and notations essential to our work. For further details, we recommend referring to \cite{Car1,Car2,Mac,Mus,Mus1}.
	
	The field of two elements, \(0\) and \(1\), is denoted by \(\mathbb{F}_2\). Let $n$ and $m$ be positive integers. The vector space of dimension \(n\) over \(\mathbb{F}_2\) is represented as \(\mathbb{F}_2^n\). We use $0_n$ to denote the zero vector $(0,\ldots, 0)$ in $\mathbb{F}_2^n$. The size of any finite set \(A\) is denoted by\(|A|\).
	
	A function \(F\) from \(\mathbb{F}_2^n\) to \(\mathbb{F}_2^m\) is referred to as a {\em vectorial Boolean function (vBf)}. We simply say a {\em Boolean function} when \(m=1\) and it is represented with lowercase letters, such as \(f\). The set of all Boolean functions on $n$ variables is denoted by $B_n$. 
	
	A Boolean function on $n$ variables in Algebraic Normal Form (ANF) is given by: \[f(x_1,...,x_n)=\sum_{I\subseteq \mathcal{P}}a_I\left(\prod_{i\in I}x_i\right),\] where $\mathcal{P}=\{1,...,n\}$ and $a_I\in \mathbb{F}_2$ for all $I\subseteq P$. The {\em algebraic degree} (or simply {\em degree}) of $f$, denoted by $\deg(f)$, is $\max_{I\subseteq \mathcal{P}}\{|I| : a_I\ne 0\}$.
	
	The {\em weight} of $f\in B_n$ is defined as $\mathrm{wt}(f)=|\{x\in \mathbb{F}_2^n : f(x)=1\}|$. We say that $f$ is {\em balanced} if $|\{x\in \mathbb{F}_2^n: f(x)=1\}|=|\{x\in \mathbb{F}_2^n : f(x)=0\}|$, that is, if $\mathrm{wt}(f)=2^{n-1}$. The {\em evaluation} of \(f\in B_n\) is defined by \({\rm ev}(f) = (f(x))_{x \in \mathbb{F}_2^n}\) (with an implicit ordering on $\mathbb{F}_2^n$).
	
	A Boolean function $f\in B_n$ is called {\em constant} if $\deg(f)=0$, {\em affine} if $\deg(f)\leq 1$, {\em linear} if $\deg(f)=1$ and $f(0_n)=0$, {\em quadratic} if $\deg(f)\le 2$ and {\em strictly quadratic} if $\deg(f)=2$. Let $\alpha\in\mathbb{F}_2^n$ and $c\in\mathbb{F}_2$. An affine function is given by $\alpha\cdot x+c$. Let $\alpha\ne 0_n$, a linear function is given by $\alpha\cdot x$. Any affine function with $\alpha \neq 0_n$ is balanced \cite{Car1}. 
	
	The {\em Walsh transform} of \(f \in B_n\) is defined as the function \(\mathcal{W}_f: \mathbb{F}_2^n \to \mathbb{Z}\) given by \(\mathcal{W}_f(a) = \sum_{x \in \mathbb{F}_2^n} (-1)^{f(x) + a \cdot x},\) for any \(a \in \mathbb{F}_2^n\). We denote \(\mathcal{F}(f)=\mathcal{W}_f(0_n)\).

	\begin{remark}\label{fourier} \cite [p.76]{Car1}
		For any $f\in B_n$, \(\mathcal{F}(f)=2^n-2\mathrm{wt}(f).\)  \\Therefore, \(f\) is balanced if and only if \(\mathcal{F}(f) = 0\).
	\end{remark}
	
	The {\em nonlinearity} of $f\in B_n$  is given by $\mathcal{N}(f)=2^{n-1}-\frac{1}{2}\max_{a\in\mathbb{F}_2^n}|\mathcal{W}_f(a)|$. It is well-known that $\mathcal{N}(f)\leq 2^{n-1}-2^{\frac{n}{2}-1}$ \cite[p.80]{Car1}. A Boolean function $f\in B_n$ is called {\em bent} if $\mathcal{N}(f)=2^{n-1}-2^{\frac{n}{2}-1}$ and this may happen only when $n$ is even. A Boolean function $f\in B_n$ is called {\em semi-bent} if $\mathcal{N}(f)=2^{n-1}-2^{\frac{n-1}{2}}$ and this many happen only when $n$ is odd.
	
	We define the {\em first-order derivative $D_af$} of $f\in B_n$ in the direction of $a\in\mathbb{F}_2^n$ by\\ $D_af(x)=f(x+a)+f(x)$. A Boolean function $f\in B_n$ is bent if and only if  $D_af$ is balanced for all nonzero $a\in \mathbb{F}_2^n$ \cite[p.82]{Car1}.  
	
	Any $a\in\mathbb{F}_2^n$ is called a {\em linear structure} of $f$ if $D_af$ is constant. The set of all linear structures of \( f \), denoted by \( V(f) \), is a subspace of \( \mathbb{F}_2^n\) \cite[p.99]{Car1}. A Boolean function $f$ is {\em partially-bent} if there exists a linear subspace $E$ of $\mathbb{F}_2^n$ such that the restriction of $f$ to $E$ is affine and the restriction of $f$ to any complementary subspace $E'$ of $E$ (where $E\oplus E'=\mathbb{F}_2^n$) is bent. This also includes the case $E=\mathbb{F}_2^n$, i.e., $\deg(f)\leq 1$. It was observed in \cite{Cal} that $E=V(f)$ for a partially-bent function $f$. It is well-known that every quadratic function is partially-bent \cite[p.257]{Car1}.
	
	As it can be noted in \cite[p.257]{Car1} and \cite[p.141]{Car3}, the quantity \(\mathcal{F}(f)\) of any unbalanced partially-bent function \(f\) is linked to \(\dim V(f)\). We present this result in the following. 
	
	\begin{remark}\label{fourier-partially-bent}
		For any unbalanced partially-bent function $f\in B_n$, let $k=\dim V(f)$. Then $k\equiv n\pmod{2}$ \ and \ $\mathcal{F}(f)\in
		\Big\lbrace\pm2^{\frac{n+k}{2}} \Big\rbrace$. \\Therefore, for $n$ even, we have that $\dim V(f)=0$ if and only if f is bent,\\ while for $n$ odd $\dim V(f)\ge 1$, $\dim V(f)=1$ if and only if f is semi-bent.
	\end{remark}

	A vBf $F$ from $\mathbb{F}_2^n$ to $\mathbb{F}_2^m$ can be represented by $F=(f_1,...,f_m)$, where $f_1,...,f_m$ are Boolean functions on $n$ variables called {\em coordinate functions}. The Boolean functions given by $\lambda\cdot F$, with $\lambda\in \mathbb{F}_2^m$ and "$\cdot$" denoting the dot product, are called {\em component functions} (or {\em components}) of $F$ and they are denoted by $F_\lambda$. The components $\{F_\lambda\}_{\lambda\ne 0_m}$ are called {\em nontrivial components}. The degree of $F$ is given by $\deg(F)=\max_{\lambda\in \mathbb{F}^m}\deg(F_\lambda)$, which clearly is the maximum degree over the coordinate functions. 
	
	The first-order derivative of a vBf $F$ in the direction of any $a\in\mathbb{F}_2^n$ is defined in a similar manner as for Boolean functions, that is, it is given by  $D_aF(x)=F(x+a)+F(x)$.
	
	Let $X$ and $Y$ be finite sets. Let $G$ be a function from $X$ to $Y$. The {\em pre-image} of $y\in Y$ by $G$ is defined by $G^{-1}(y)=\{x\in X : G(x)=y\}$. We say that \(G\) is {\em injective} if $x_1\ne x_2$ implies $G(x_1)\ne G(x_2)$ for all $x_1,x_2\in X$. Equivalently, we say $G$ is injective if $|G^{-1}(y)|\leq 1$ for all $y\in Y$. The {\em image} of a function \(G\) is defined as \({\rm Im}(G) = \{G(x) : x \in X\}\subseteq Y\). When $F$ is a vBf observe that the definition of injectivity is equivalent to stating that \(|{\rm Im}(F)| = 2^n\), which turns out to be very useful in some proofs in this paper. We call an injective vBf an {\em embedding}.

	A vBf $F$ is {\em balanced} if $|F^{-1}(y)|=2^{n-m}$ for all $y\in\mathbb{F}_2^m$. For  $n=m$, a balanced vBf is a permutation, which clearly is an embedding. A vBf $F$ is balanced if and only if all nontrivial components are balanced \cite[p.112]{Car1}. Let \(B(F)=\{\lambda\in\mathbb{F}_2^m : F_\lambda \text{ is balanced}\}\). Since $F$ is balanced if and only if $|B(F)|=2^m-1$, we can view $|B(F)|$ as a measure of {\em balancedness} for $F$.
	
	We call a vBf $F$ a {\em quadratic embedding} if it is an embedding with $\deg(F)=2$.  We say that $F$ is a {\em partially-bent embedding} if it is an embedding with $\deg(F)\ge 2$ and all nontrivial components are partially-bent.
	
	\begin{remark}
		According to our definitions, all quadratic embeddings are partially-bent embeddings. Moreover, a quadratic/partially-bent embedding $F$ may have affine nontrivial components, but $F$ must also have higher-degree components (with degree two if $F$ is quadratic).
	\end{remark}
	
	If $F$ is a vBf from $\mathbb{F}_2^n$ to $\mathbb{F}_2^m$, we define $C(F)=\{\lambda\in\mathbb{F}_2^m : F_\lambda \text{ is constant}\}$. Clearly, $C(F)$ is a vector space. A function $\psi_n$ from $\mathbb{F}_2^n$ to itself is called an {\em affinity} if there is an invertible matrix $M$ and a vector $t\in \mathbb{F}_2^n$ such that $\psi_n(x)=xM+t$, for any $x\in \mathbb{F}_2^n$ (using row vector notation).
	Any two vBfs $F$ and $F'$ are called {\em affine equivalent} if there are two affinities, $\psi_n$ and $\psi_m$, such that $F'=\psi_m \circ F \circ \psi_n$ \cite{Car1}. Two affine equivalent vBfs share many properties, but for our goals it is enough to know that $|B(F)|=|B(F')|$ and $|C(F)|=|C(F')|$.
	
	\section{Preliminary results on Boolean functions}\label{preliminary-results}
	
	In this section, we focus on Boolean functions and determine the sum of the weights of their first-order derivatives. We are specifically interested in the result for quadratic and partially-bent functions because it is used in the next section. We begin with the following result, which can also be found in \cite[p.1496]{Car2}. We report this proof to accustom the reader to our proving methods.
	
	\begin{lemma}\label{square-fourier}
		For any $f\in B_n$, we have
		\begin{align*}
			\mathcal{F}^2(f)=\sum_{a\in\mathbb{F}_2^n}\mathcal{F}(D_af).
		\end{align*}
	\end{lemma}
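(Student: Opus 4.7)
The plan is a direct computation starting from the definition of $\mathcal{F}(f)$ given in Remark \ref{fourier}. Squaring the definition turns $\mathcal{F}^2(f)$ into a double sum over $\mathbb{F}^n\times\mathbb{F}^n$, and a change of variables will reorganize that double sum into a sum of derivative Walsh values at zero.

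Concretely, I would first write
\begin{align*}
\mathcal{F}^2(f) \;=\; \Bigl(\sum_{x\in\mathbb{F}^n}(-1)^{f(x)}\Bigr)\Bigl(\sum_{y\in\mathbb{F}^n}(-1)^{f(y)}\Bigr) \;=\; \sum_{x,y\in\mathbb{F}^n}(-1)^{f(x)+f(y)}.
\end{align*}
Then I would introduce the substitution $a=x+y$. Since we are working in characteristic two, the map $(x,y)\mapsto (x,x+y)$ is a bijection of $\mathbb{F}^n\times\mathbb{F}^n$ onto itself (its own inverse, in fact), so summing over $(x,y)$ is the same as summing over $(x,a)$. Under this reindexing, $y=x+a$ and hence $f(x)+f(y)=f(x)+f(x+a)=D_af(x)$.

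Substituting and interchanging the order of summation yields
\begin{align*}
\mathcal{F}^2(f) \;=\; \sum_{a\in\mathbb{F}^n}\sum_{x\in\mathbb{F}^n}(-1)^{D_af(x)} \;=\; \sum_{a\in\mathbb{F}^n}\mathcal{F}(D_af),
\end{align*}
where the last equality uses Remark \ref{fourier} applied to the Boolean function $D_af$. This gives the claimed identity. There is essentially no obstacle to overcome: the only point that deserves care is justifying that the change of variable $a=x+y$ really is a bijection of $\mathbb{F}^n\times\mathbb{F}^n$, which is immediate in characteristic two.
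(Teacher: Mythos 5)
Your proposal is correct and follows essentially the same argument as the paper: expand $\mathcal{F}^2(f)$ as a double sum over $(x,y)$, reindex via $y=x+a$, and recognize the inner sum as $\mathcal{F}(D_af)$. The only difference is that you explicitly justify the bijectivity of the change of variables, which the paper leaves implicit.
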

	
	\begin{proof}
		We have the following:
		\begin{align*}
			\mathcal{F}^2(f)&=\left(\sum_{x\in\mathbb{F}_2^n}(-1)^{f(x)}\right)^2=\sum_{x,y\in\mathbb{F}_2^n}(-1)^{f(x)+f(y)}= \sum_{x,a\in\mathbb{F}_2^n}(-1)^{f(x)+f(x+a)}\\&=\sum_{x,a\in\mathbb{F}_2^n}(-1)^{D_af(x)}=\sum_{a\in\mathbb{F}_2^n}\sum_{x\in\mathbb{F}_2^n}(-1)^{D_af(x)}= \sum_{a\in\mathbb{F}_2^n}\mathcal{F}(D_af).\qedhere
		\end{align*}
	\end{proof}
	
	By applying  Lemma  \ref{square-fourier} and Remark \ref{fourier}, we can easily infer the following. 
	
	\begin{corollary} \label{weight-derivatives-square-fourier}
		Let $f\in B_n$. Then
		\[\sum_{a\in\mathbb{F}_2^n}{\rm wt}(D_af)=2^{2n-1}-\frac{1}{2}\mathcal{F}^2(f).\]
	\end{corollary}
	
	
	From Corollary \ref{weight-derivatives-square-fourier}, Remark \ref{fourier} and the inequality $\mathcal{F}^2(f)\ge 0$, we immediately get the following lemma.
	
	\begin{lemma}\label{sum-weight-derivatives-balanced}
		For any $f\in B_n$, we have
		\[\sum_{a\in\mathbb{F}_2^n\setminus\{0_n\}}{\rm wt}(D_af)=\sum_{a\in\mathbb{F}_2^n}{\rm wt}(D_af)\leq 2^{2n-1}.\] Furthermore, equality holds if and only if $f$ is balanced.
	\end{lemma}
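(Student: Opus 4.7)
The plan is to derive this directly from Lemma \ref{weight-derivatives-square-fourier} together with the trivial behavior of the derivative at $0_n$.

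First I would observe that $D_{0_n}f(x) = f(x+0_n)+f(x) = 0$ is the zero function, so $\mathrm{wt}(D_{0_n}f)=0$. This instantly justifies restricting the summation index from $\mathbb{F}^n$ to $\mathbb{F}^n\setminus\{0_n\}$ without changing the value, giving the first equality.

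For the inequality, I would apply Lemma \ref{weight-derivatives-square-fourier} to write
\[
\sum_{a\in\mathbb{F}^n}\mathrm{wt}(D_af) \;=\; 2^{2n-1}-\tfrac{1}{2}\mathcal{F}^2(f).
\]
Since $\mathcal{F}(f)$ is a real integer, $\mathcal{F}^2(f)\geq 0$, and therefore the sum is bounded above by $2^{2n-1}$.

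Finally, for the equality case, the above identity shows that equality occurs exactly when $\mathcal{F}^2(f)=0$, i.e.\ when $\mathcal{F}(f)=0$. By Remark \ref{fourier}, this is equivalent to $\mathrm{wt}(f)=2^{n-1}$, that is, to $f$ being balanced. There is no real obstacle here; the proof is essentially a direct corollary of the previous lemma plus the vanishing of the trivial derivative.
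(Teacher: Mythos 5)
Your proof is correct and follows essentially the same route as the paper: noting that $\mathrm{wt}(D_{0_n}f)=0$ to justify the first equality, then invoking Lemma \ref{weight-derivatives-square-fourier} with $\mathcal{F}^2(f)\geq 0$ and the characterization of balancedness via $\mathcal{F}(f)=0$. No issues.
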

	
	We next determine the value \ $\sum_{a\in\mathbb{F}_2^n\setminus\{0_n\}}{\rm wt}(D_af)$ \ for partially-bent functions.

	\begin{corollary}\label{sum-weight-partially-bent-derivatives}
		Let $f\in B_n$ be partially-bent and let $k=\dim V(f)$. Then 
		\begin{itemize}
			\item[(i)] \(\sum_{a\in\mathbb{F}_2^n\setminus\{0_n\}}{\rm wt}(D_af)= 2^{2n-1}\) if $f$ is balanced,
			\item[(ii)] \( \sum_{a\in\mathbb{F}_2^n\setminus\{0_n\}}{\rm wt}(D_af)=2^{2n-1}-2^{n+k-1}\) if $f$ is unbalanced.
		\end{itemize}
	\end{corollary}
	\begin{proof}
		If $f$ is balanced, we just apply Lemma \ref{sum-weight-derivatives-balanced}.  \\Otherwise, we just apply Remark \ref{fourier-partially-bent} to Corollary \ref{weight-derivatives-square-fourier}.
	\end{proof}

	\section{Results for general vectorial Boolean functions}\label{main-results}
	In this section, we study some quantities for vBfs from $\mathbb{F}_2^n$ to $\mathbb{F}_2^m$, where $n$ and $m$ are any positive integers. We begin with the following lemma.

	\begin{lemma} \label{nonpermutation-eqn}
		Let $F$ be a vBf from $\mathbb{F}_2^n$ to $\mathbb{F}_2^m$. Let $a\in\mathbb{F}_2^n$. Then 
		\begin{align}\label{nonpermutation-eqn-2}
			\sum_{x\in\mathbb{F}_2^n}\sum_{\lambda\in\mathbb{F}_2^m}(-1)^{\lambda\cdot (F(x)+F(x+a))}=\begin{cases}
				2^{n+m} & \text{ if } a=0_n\\
				2^m\,|\{x \in \mathbb{F}_2^n : F(x)=F(x+a)\}| & \text{ if } a\ne 0_n.
			\end{cases}
		\end{align}
	\end{lemma}
	
	\begin{proof}
		Observe that $F(x)=F(x+a)$ implies that $F(x)+F(x+a)=0_m$ and in this case \[\sum_{\lambda\in\mathbb{F}_2^m}(-1)^{\lambda\cdot (F(x)+F(x+a))}=\sum_{\lambda\in\mathbb{F}_2^m}(-1)^{\lambda\cdot 0_m}=2^m.\] When $a=0_n$, we have a special case of the above equation, since $F(x)+F(x+a)=F(x)+F(x)=0_m$.  
		
		On the other hand,
		$F(x)\neq F(x+a)$ implies that $F(x)+F(x+a)=\varepsilon\neq 0_m$ and so  \begin{align*}\label{nonpermutation-eqn-1} \sum_{\lambda\in\mathbb{F}_2^m}(-1)^{\lambda\cdot (F(x)+F(x+a))}=\sum_{\lambda\in\mathbb{F}_2^m}(-1)^{\lambda\cdot \varepsilon}=0.\end{align*} We get $0$ in the preceding equation because $\lambda\mapsto\lambda\cdot \varepsilon$ is a linear Boolean function in $\lambda$.
		
		By summing over $x$ the previous equalities, we obtain Equation (\ref{nonpermutation-eqn-2}). 
	\end{proof}

	Next we establish a relation that characterizes the embeddings and it is useful in determining the number of balanced components in the next section.

	\begin{theorem}\label{square-fourier-vectorial}
		Let $F$ be a vBf from $\mathbb{F}_2^n$ to $\mathbb{F}_2^m$. Then
		\[\sum_{\lambda\in\mathbb{F}_2^{m}}\mathcal{F}^2(F_\lambda)\geq 2^{n+m}.\]
		Moreover, equality holds if and only if \ $m\ge n$ and $F$ is an embedding.
	\end{theorem}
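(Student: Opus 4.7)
The plan is to reduce the left-hand side to a combinatorial quantity counting collisions of $F$, then bound this quantity below using the trivial inequality $k^2 \geq k$ for nonnegative integers.

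First I would expand each $\mathcal{F}^2(F_\lambda)$ using Lemma \ref{square-fourier}, obtaining
\[
\sum_{\lambda \in \mathbb{F}^m} \mathcal{F}^2(F_\lambda) = \sum_{\lambda \in \mathbb{F}^m} \sum_{a \in \mathbb{F}^n} \mathcal{F}(D_a F_\lambda).
\]
Since $D_a F_\lambda(x) = \lambda \cdot [F(x+a)+F(x)]$, writing out $\mathcal{F}(D_a F_\lambda)$ as a sum over $x$ and swapping all three summations yields
\[
\sum_{\lambda \in \mathbb{F}^m} \mathcal{F}^2(F_\lambda) = \sum_{a \in \mathbb{F}^n} \sum_{x \in \mathbb{F}^n} \sum_{\lambda \in \mathbb{F}^m}(-1)^{\lambda \cdot [F(x)+F(x+a)]}.
\]

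Next I would apply Lemma \ref{nonpermutation-eqn} to the innermost sum, which collapses it to $2^m$ when $F(x) = F(x+a)$ and $0$ otherwise. Hence
\[
\sum_{\lambda \in \mathbb{F}^m} \mathcal{F}^2(F_\lambda) = 2^m \cdot \bigl|\{(x,a) \in \mathbb{F}^n \times \mathbb{F}^n : F(x) = F(x+a)\}\bigr|.
\]
Grouping by the value $y = F(x)$ and substituting $y' = x+a$, the cardinality on the right equals $\sum_{y \in \mathbb{F}^m} n_y^2$, where $n_y := |F^{-1}(y)|$.

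The final step is the key bound. Since $\sum_y n_y = 2^n$ and each $n_y$ is a nonnegative integer, we have $n_y^2 \geq n_y$, so $\sum_y n_y^2 \geq 2^n$. This gives $\sum_\lambda \mathcal{F}^2(F_\lambda) \geq 2^{n+m}$. For the equality characterization, $n_y^2 = n_y$ forces $n_y \in \{0,1\}$ for every $y$, which is exactly the condition that $F$ is injective, i.e., an embedding. I do not anticipate a real obstacle here: the only subtlety is the bookkeeping in the triple summation and making sure the substitution identifying the collision count with $\sum_y n_y^2$ is stated cleanly.
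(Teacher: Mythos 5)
Your proposal is correct and takes essentially the same route as the paper: both expand $\sum_{\lambda}\mathcal{F}^2(F_\lambda)$ via Lemma \ref{square-fourier} into the triple sum $\sum_{a,x,\lambda}(-1)^{\lambda\cdot[F(x)+F(x+a)]}$ and collapse the inner sum with Lemma \ref{nonpermutation-eqn}. The only difference is the final accounting: the paper isolates the $a=0_n$ term (worth $2^{n+m}$) and observes the remaining addends are nonnegative, whereas you convert the whole sum to $2^m\sum_y |F^{-1}(y)|^2$ and use $\sum_y |F^{-1}(y)|^2\geq \sum_y |F^{-1}(y)|=2^n$ --- the same nonnegativity in different packaging, with your version making the equality case (all fibers of size at most one) slightly more transparent.
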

	
	\begin{proof}
		Let $\varphi_F(a)=\sum_{x\in\mathbb{F}_2^n}\sum_{\lambda\in\mathbb{F}_2^m}(-1)^{\lambda\cdot (F(x)+F(x+a))}$. From Lemma \ref{nonpermutation-eqn}, it is obvious that \begin{itemize}
			\item[i)] $\varphi_F(a)\ge 0$, for any $a\in\mathbb{F}_2^n$,
			\item[ii)] $\varphi_F(a)=0$ if and only if $F(x)\ne F(x+a)$ for all $x\in\mathbb{F}_2^n$.
		\end{itemize}
		
		Note that $F$ is an embedding if and only if, for any $a\ne 0_n$, we have $F(x)+F(x+a)\neq 0_m$ for all $x\in\mathbb{F}_2^n$. Hence, $F$ is embedding if and only if $\varphi_F(a)=0$ for any $a\ne 0_n$. 
		
		Now we look at the quantity $\sum_{a\ne 0_n}\varphi_F(a)$. \\For any $F$, \begin{align}\label{eqn-4}
			\sum_{a\ne 0_n}\varphi_F(a)=\sum_{a\ne 0_n}\sum_{x\in\mathbb{F}_2^n}\sum_{\lambda\in\mathbb{F}_2^m}(-1)^{\lambda\cdot (F(x)+F(x+a))}\ge 0,
		\end{align} while $F$ is an embedding if and only if \ $\sum_{a\ne 0_n}\varphi_F(a)=0$.
		
		Next, we apply Relation (\ref{eqn-4}) and Lemma \ref{square-fourier} in the following:
		\begin{align}\label{eqn-5}
			\sum_{\lambda\in\mathbb{F}_2^m}\mathcal{F}^2(F_\lambda)&= \sum_{\lambda\in\mathbb{F}_2^m}\left(\sum_{a\in\mathbb{F}_2^n}\mathcal{F}(D_aF_\lambda)\right)= \sum_{\lambda\in\mathbb{F}_2^m}\sum_{a\in\mathbb{F}_2^n}\sum_{x\in\mathbb{F}_2^n}(-1)^{D_aF_\lambda(x)}\nonumber\\&=\sum_{a\in\mathbb{F}_2^n}\sum_{x\in\mathbb{F}_2^n}\sum_{\lambda\in\mathbb{F}_2^m}(-1)^{\lambda\cdot \left(F(x)+F(x+a)\right)}\nonumber\\&= \sum_{x\in\mathbb{F}_2^n}\sum_{\lambda\in\mathbb{F}_2^m}(-1)^{\lambda\cdot \left(F(x)+F(x+0_n)\right)}+\sum_{a\ne 0_n}\sum_{x\in\mathbb{F}_2^n}\sum_{\lambda\in\mathbb{F}_2^m}(-1)^{\lambda\cdot \left(F(x)+F(x+a)\right)}\nonumber\\&=\sum_{x\in\mathbb{F}_2^n}\sum_{\lambda\in\mathbb{F}_2^m}(-1)^{\lambda\cdot 0_m}+\sum_{a\ne 0_n}\varphi_F(a)=2^{n+m} +\sum_{a\ne 0_n}\varphi_F(a)\nonumber\\&\geq 2^{n+m}.
		\end{align}
		In Relation (\ref{eqn-5}), equality holds if and only if $F$ is an embedding.
	\end{proof}
	\begin{remark}
		Worth mentioning is the study in \cite[p.126]{Car4}, where the value $\sum_{\lambda\in\mathbb{F}_2^m}\mathcal{F}^2(F_\lambda)$ was considered and shown to be \[\sum_{\lambda\in\mathbb{F}_2^m}\mathcal{F}^2(F_\lambda)=2^m\sum_{b\in\mathbb{F}_2^m}|F^{-1}(b)|^2.\] Theorem \ref{square-fourier-vectorial} could also be proved by using this relation.
	\end{remark}
	
	We next apply Lemma \ref{square-fourier} and Theorem \ref{square-fourier-vectorial} to directly deduce the following.
	
	\begin{corollary}\label{fourier-derivatives-vectorial}
		Let $F$ be a vBf from $\mathbb{F}_2^n$ to $\mathbb{F}_2^m$. Then we have 
		\[\sum_{\lambda\in\mathbb{F}_2^{m}}\sum_{a\in\mathbb{F}_2^n}\mathcal{F}(D_aF_\lambda)\geq 2^{n+m}.\]
		Moreover, equality holds if and only if $m\geq n$ and  $F$ is an embedding.
	\end{corollary}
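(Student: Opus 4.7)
The plan is essentially a one-line substitution, since the corollary simply rewrites the left-hand side of Theorem \ref{square-fourier-vectorial} via Lemma \ref{square-fourier}. First I would fix $\lambda \in \mathbb{F}^m$ and apply Lemma \ref{square-fourier} to the Boolean function $F_\lambda$, obtaining $\mathcal{F}^2(F_\lambda) = \sum_{a \in \mathbb{F}^n} \mathcal{F}(D_a F_\lambda)$. Then I would sum this identity over all $\lambda \in \mathbb{F}^m$ to get
\[
\sum_{\lambda \in \mathbb{F}^m} \mathcal{F}^2(F_\lambda) \;=\; \sum_{\lambda \in \mathbb{F}^m}\sum_{a \in \mathbb{F}^n} \mathcal{F}(D_a F_\lambda).
\]

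Next I would invoke Theorem \ref{square-fourier-vectorial}, which guarantees that the left-hand side is at least $2^{n+m}$, with equality precisely when $F$ is an embedding. Combining the two gives the inequality $\sum_{\lambda}\sum_{a}\mathcal{F}(D_aF_\lambda)\geq 2^{n+m}$ together with the stated characterization of equality.

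There is no real obstacle here: the content is entirely contained in the two previous results, and the corollary just exposes the inner sum on derivatives rather than the squared Fourier value at zero. The only thing worth noting in the write-up is that the interchange of the two finite summations over $\lambda$ and $a$ is trivially justified, so no convergence or combinatorial argument is needed beyond citing the two prior statements.
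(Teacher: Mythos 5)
Your proposal is correct and is exactly the deduction the paper intends: the paper gives no separate proof of this corollary, stating only that it follows from Lemma \ref{square-fourier} and Theorem \ref{square-fourier-vectorial}, which is precisely your substitution of $\mathcal{F}^2(F_\lambda)=\sum_{a\in\mathbb{F}^n}\mathcal{F}(D_aF_\lambda)$ followed by summing over $\lambda$. Nothing further is needed.
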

	
	\section{Results for vectorial Boolean functions with \texorpdfstring{$m\ge n$}{m >= n}}\label{main-results-1}
	This section explores vBf's from $\mathbb{F}_2^n$ to $\mathbb{F}_2^m$, with $m \ge n$. We investigate the balancedness of their component functions, with a particular emphasis on injective functions (embeddings). 
	
	Corollary \ref{fourier-derivatives-vectorial} carries significant implications for the weights of the first-order derivatives of components. This can be seen in the following.
	
	\begin{corollary}\label{sum-weight-derivatives-vectorial-Boolean-functions}
		Let $F$ be a vBf from $\mathbb{F}_2^n$ into $\mathbb{F}_2^m$, with $m\geq n$. Then \begin{align}\label{weight-embedding}\sum_{\lambda\in\mathbb{F}_2^m\setminus\{0_m\}}\sum_{a\in\mathbb{F}_2^n}{\rm wt}(D_aF_\lambda)\leq 2^{2n-1}(2^m-2^{m-n}).\end{align}  Moreover, equality holds if and only if $F$ is an embedding.
	\end{corollary}
	
	\begin{proof} We begin by observing that 
		\begin{align}\label{fourier-derivatives-zero-eqn}
			&\sum_{\lambda\in\mathbb{F}_2^{m}}\sum_{a\in\mathbb{F}_2^n}\mathcal{F}(D_aF_\lambda)=\sum_{a\in\mathbb{F}_2^n}\mathcal{F}(D_aF_{0_m})+\sum_{\lambda\in\mathbb{F}_2^{m}\setminus\{0_m\}}\sum_{a\in\mathbb{F}_2^n}\mathcal{F}(D_aF_\lambda)\nonumber\\&=2^{2n}+\sum_{\lambda\in\mathbb{F}_2^{m}\setminus\{0_m\}}\sum_{a\in\mathbb{F}_2^n}\mathcal{F}(D_aF_\lambda).
		\end{align}
		So, by Corollary \ref{fourier-derivatives-vectorial} and Equation (\ref{fourier-derivatives-zero-eqn}), we obtain the following:
		\begin{align}\label{fourier-to-weight-eqn}
			\sum_{\lambda\in\mathbb{F}_2^{m}\setminus\{0_m\}}\sum_{a\in\mathbb{F}_2^n}\mathcal{F}(D_aF_\lambda)\geq 2^{n+m}-2^{2n}=2^{2n}(2^{m-n}-1).
		\end{align} and equality holds if and only if $F$ is an embedding. \\ By Equation (\ref{fourier-to-weight-eqn}), we infer the following:
		
		\begin{align*}
			2^{2n}(2^{m-n}-1)& \leq \sum_{\lambda\in\mathbb{F}_2^m\setminus\{0_m\}}\sum_{a\in\mathbb{F}_2^n}\mathcal{F}(D_aF_\lambda)\\&=\sum_{\lambda\in\mathbb{F}_2^m\setminus\{0_m\}}\sum_{a\in\mathbb{F}_2^n}\left(2^n-2{\rm wt}(D_aF_\lambda)\right)\\&=\sum_{\lambda\in\mathbb{F}_2^m\setminus\{0_m\}}\left(2^{2n}-2\sum_{a\in\mathbb{F}_2^n}{\rm wt}(D_aF_\lambda)\right)\\&=2^{2n}(2^m-1)-2\sum_{\lambda\in\mathbb{F}_2^m\setminus\{0_m\}}\sum_{a\in\mathbb{F}_2^n}{\rm wt}(D_aF_\lambda)
		\end{align*} from which we obtain Relation (\ref{weight-embedding}) and equality holds if and only if $F$ is an embedding.	
	\end{proof}
	
	Next, we apply Corollary \ref{sum-weight-derivatives-vectorial-Boolean-functions} to determine an upper bound on the number of balanced components for {\em any} vBf with $m\ge n$. 
	
	\begin{corollary}\label{balanced-components}
		Let $F$ be a vBf from $\mathbb{F}_2^n$ to $\mathbb{F}_2^m$, with $m\geq n$. Then \(|B(F)|\le 2^m-2^{m-n}\). Furthermore, equality may be achieved only when $F$ is an embedding and in this case, all the other $2^{m-n}$components are constant.
	\end{corollary}
	
	\begin{proof}
		Let $F_\lambda$ be a component. By Lemma \ref{sum-weight-derivatives-balanced}, the value \(\sum_{a\in\mathbb{F}_2^n}{\rm wt}(D_aF_\lambda)\) is exactly $2^{2n-1}$ if $F_\lambda$ is balanced, strictly less otherwise. Considering (\ref{weight-embedding}), if $|B(F)|=2^m-2^{m-n}$ then the inequality reaches equality, if $|B(F)|>2^m-2^{m-n}$ then the inequality is contradicted.  Therefore, \(F\) can have at most \(2^m - 2^{m-n}\) balanced components. 
		
		When $|B(F)|=2^m-2^{m-n}$, $F$ must be an embedding since equality holds for (\ref{weight-embedding}). In this case, exactly \(2^{m-n}\) vectors \(\mu \in \mathbb{F}_2^m\) are such that \(\sum_{a \in \mathbb{F}_2^n} \mathrm{wt}(D_a F_\mu) = 0\), which occurs only when the corresponding components \(F_\mu\) are constants.
	\end{proof}
	
	\begin{remark}
		From Corollary \ref{balanced-components}, a vBf from \(\mathbb{F}_2^n\) to \(\mathbb{F}_2^m\), where \(m > n\), cannot have all its nontrivial components balanced, since it has at least $2^{m-n}$ unbalanced components.
	\end{remark}
	
	Let $F$ be a vBf from $\mathbb{F}_2^n$ to $\mathbb{F}_2^m$. Recall that $C(F)=\{\lambda\in\mathbb{F}_2^m : F_\lambda \text{ is constant}\}$ is a vector space. 
	
	\begin{lemma}\label{costants}
		Let $F$ be a vBf from $\mathbb{F}_2^n$ to $\mathbb{F}_2^m$ with $m\ge n$. Let $h=\dim C(F)$. Then
		$$
		|{\rm Im}(F)|\le 2^{m-h}\;.
		$$
	\end{lemma}
	
	\begin{proof}
		The case $h=0$ is obvious. 
		
		If $h\ge 1$, let $E$ be such that $\mathbb{F}_2^m=C(F)\oplus E$. We claim that 
		\begin{align}\label{surjection}
			\mbox{for any }x,y\in \mathbb{F}_2^n,\qquad  \pi_E\circ F(x)\ne \pi_E\circ F(y) \iff F(x) \ne F(y)\,,
		\end{align} where $\pi_E: \mathbb{F}_2^m\to E$ is any surjective function. From (\ref{surjection}) we have that $|{\rm Im}(\pi_E\circ F)|=|{\rm Im}(F)|$. Since ${\rm Im}(\pi_E\circ F)\subseteq E$ and $|E|=2^{\dim E}=2^{m-h}$, we have \[|{\rm Im}(F)|\le |E|=2^{m-h}.\] 
		
		To prove (\ref{surjection}), observe that ($\Rightarrow$) comes from $E \subset \mathbb{F}_2^m$, while ($\Leftarrow$) follows from the direct sum $\mathbb{F}_2^n=C(F)\oplus E$ and the equality (for any $x,y$) $\pi_{C(F)}\circ F(x)=\pi_{C(F)}\circ F(y)$.
	\end{proof}

	We again use Corollary \ref{sum-weight-derivatives-vectorial-Boolean-functions} to derive the following result on the number of constant components for an embedding.

	\begin{corollary}\label{constant-components}
		Let $F$ be an embedding from $\mathbb{F}_2^n$ to $\mathbb{F}_2^m$, with $m\geq n$. Then \(|C(F)|\le 2^{m-n}\). \\Furthermore, if equality is achieved, then  all the other $2^m-2^{m-n}$ components are balanced.
	\end{corollary}
	
	\begin{proof}
		According to Lemma \ref{costants}, $|{\rm Im}(F)|\le 2^{m-h}$, where $h=\dim(C(F))$. If $|C(F)|> 2^{m-n}$, then $h > m-n$ and so $|{\rm Im}(F)| < 2^{m-(m-n)}=2^n$, which is impossible.
		
		If \(|C(F)|=2^{m-n}\) and  $F$ is an embedding, then (\ref{weight-embedding}) becomes \begin{align}\label{8}\sum_{\lambda\in\mathbb{F}_2^m\setminus C(F)}\sum_{a\in\mathbb{F}_2^n}{\rm wt}(D_aF_\lambda)=2^{2n-1}(2^m-2^{m-n}).\end{align} But $|\mathbb{F}_2^m\setminus C(F)|=2^m-2^{m-n}$ and each addend in (\ref{8}) has to be at most $2^{2n-1}$, hence each addend is exactly $2^{2n-1}$. 
	\end{proof}
	
	We next determine a lower bound on the number of balanced components for any partially-bent embedding. Observe that if $m=n$ then any embedding is a permutation, which implies that $|B(F)|=2^n-1$. So in the following statement, we are only interested in the case where $m$ is strictly larger than $n$.
	
	\begin{theorem}\label{least-balanced-component}
		Let $F$ be a partially-bent embedding from $\mathbb{F}_2^n$ to  $\mathbb{F}_2^m$, with $m\ge n+1$. Then 
		\begin{itemize}
			\item[(i)]  $|B(F)|\geq 2^n-1$, for $n$ even and equality holds if and only if all the other nontrivial components (i.e., the unbalanced components) are bent,
			\item[(ii)] $|B(F)|\geq 2^{m-1}+2^{n-1}-1$, for $n$ odd and equality holds if and only if all the other nontrivial components (i.e., the unbalanced components) are unbalanced semi-bent.
		\end{itemize}
	\end{theorem}
	
	\begin{proof}
		When $F_\lambda$ is not balanced, we write $k_\lambda=\dim V(F_\lambda)$. Observe from Remark \ref{fourier-partially-bent} that, for $n$ even, $k_\lambda$ is even with $0\leq k_\lambda\leq n$ and for $n$ odd, $k_\lambda$ is odd with $1\leq k_\lambda\leq n$. By Corollary \ref{sum-weight-partially-bent-derivatives}, we can derive the following equality:
		
		\begin{align}\label{least-quadratic-balanced-n-m-eqn-1}
			\sum_{\lambda\in\mathbb{F}_2^m\setminus\{0_m\}}&\sum_{a\in\mathbb{F}_2^n}{\rm wt}(D_aF_\lambda)= 2^{2n-1}|B(F)|+\sum_{\lambda\in\mathbb{F}_2^m\setminus(B(F)\cup\{0_m\})}\sum_{a\in\mathbb{F}_2^n}{\rm wt}(D_aF_\lambda)\nonumber\\&=2^{2n-1}|B(F)|+\sum_{\lambda\in\mathbb{F}_2^m\setminus(B(F)\cup\{0_m\})}(2^{2n-1}-2^{n+k_\lambda-1})\nonumber\\&=2^{2n-1}|B(F)|+2^{2n-1}(2^m-|B(F)|-1)-\sum_{\lambda\in\mathbb{F}_2^m\setminus(B(F)\cup\{0_m\})}2^{n+k_\lambda-1}\nonumber\\&= 2^{2n-1}(2^m-1)-2^{n-1}\sum_{\lambda\in\mathbb{F}_2^m\setminus(B(F)\cup\{0_m\})}2^{k_\lambda}
		\end{align} Since $F$ is an embedding, by Relation (\ref{least-quadratic-balanced-n-m-eqn-1}) and Corollary \ref{sum-weight-derivatives-vectorial-Boolean-functions}, we have 
		\[2^{2n-1}(2^m-1)-2^{n-1}\sum_{\lambda\in\mathbb{F}_2^m\setminus(B(F)\cup\{0_m\})}2^{k_\lambda} \quad\quad =\quad\quad 2^{2n-1}(2^m-2^{m-n}) \] from which we deduce the following:  
		\begin{align}\label{least-quadratic-balanced-n-m-eqn-2}
			&2^{n-1}\sum_{\lambda\in\mathbb{F}_2^m\setminus(B(F)\cup\{0_m\})}2^{k_\lambda} =2^{2n-1}(2^m-1)-2^{2n-1}(2^m-2^{m-n}) = 2^{2n-1}(2^{m-n}-1)\nonumber\\& \implies \sum_{\lambda\in\mathbb{F}_2^m\setminus(B(F)\cup\{0_m\})}2^{k_\lambda} = 2^n(2^{m-n}-1)\nonumber \\&\implies \sum_{\lambda\in\mathbb{F}_2^m\setminus(B(F)\cup\{0_m\})}2^{k_\lambda} =2^m-2^n.\end{align}
		Suppose that $n$ is even. We know that $k_\lambda\ge 0$. Thus,  \begin{align*} &2^m-2^n=\sum_{\lambda\in\mathbb{F}_2^m\setminus(B(F)\cup\{0_m\})}2^{k_\lambda}\geq \sum_{\lambda\in\mathbb{F}_2^m\setminus(B(F)\cup\{0_m\})}2^0 =2^m-|B(F)|-1\\& \implies |B(F)|\geq 2^n-1.\end{align*} Clearly, $|B(F)|=2^n-1$ if and only if $k_\lambda=0$ for all $\lambda\notin B(F)\cup \{0_m\}$ if and only if $F_\lambda$ is bent for all $\lambda\notin B(F)\cup \{0_m\}$.
		
		\noindent Suppose that $n$ odd, we know that $k_\lambda\geq 1$. Thus, \begin{align*}  &2^m-2^n=\sum_{\lambda\in\mathbb{F}_2^m\setminus(B(F)\cup\{0_m\})}2^{k_\lambda}\geq \sum_{\lambda\in\mathbb{F}_2^m\setminus(B(F)\cup\{0_m\})}2^1 =2(2^m-|B(F)|-1)\\& \implies |B(F)|\geq 2^{m-1}+2^{n-1}-1.\end{align*} $|B(F)|= 2^{m-1}+2^{n-1}-1$ if and only if $k_\lambda=1$ for all $\lambda\notin B(F)\cup \{0_m\}$ if and only if $F_\lambda$ is unbalanced semi-bent for all $\lambda\notin B(F)\cup \{0_m\}$.
	\end{proof}
	
	We next provide two examples of quadratic embeddings which meet the lower bounds of the number of balanced components. These examples were found with the help of MAGMA. The first example is an embedding from \(\mathbb{F}_2^3\) to \(\mathbb{F}_2^4\) and the second example is an embedding from \(\mathbb{F}_2^4\) to \(\mathbb{F}_2^5\).
	
	\begin{example}
		The following are the coordinate functions of a quadratic embedding $F$ from $\mathbb{F}_2^3$ to $\mathbb{F}_2^4$: $f_1=x_1x_2 + x_1 + x_2 + x_3, f_2=x_1x_3 + x_1 + x_2 + x_3, f_3=x_2x_3+x_1 + x_2, f_4=x_1 + x_3$. 
		\begin{align*}
			B(F)=\{&[ 0, 0, 0, 1 ],[ 0, 0, 1, 0 ], [ 0, 1, 0, 0 ], [ 0, 1, 0, 1 ], [ 0, 1, 1, 1 ], [ 1, 0, 0, 0 ],\\& [ 1, 0, 1, 0 ], [ 1, 0, 1, 1 ], [ 1, 1, 0, 1 ], [ 1, 1, 1, 0 ], [ 1, 1, 1, 1 ]\}.\end{align*} Observe that $n$ is odd and that $|B(F)|=2^3+2^2-1=11$, so the $15-11=4$ unbalanced components have to be semi-bent. Note that $14$ components are strictly quadratic and one component is linear.
	\end{example}

	\begin{example}\label{example-2}
		The following are the coordinate functions of a quadratic embedding $F$ from $\mathbb{F}_2^4$ to $\mathbb{F}_2^5$: $f_1=x_1x_2+x_4, f_2=x_1x_3+x_3+x_4, f_3=x_1x_4+x_3x_4+x_2, f_4=x_2x_3+x_3x_4+x_1+x_4, f_5=x_1x_3+x_2x_4$. 
		\begin{align*}
			B(F)=\{&[ 0, 0, 0, 1, 0 ],[ 0, 0, 1, 0, 0 ],[ 0, 0, 1, 1, 1 ],[ 0, 1, 0, 0, 0 ],[ 0, 1, 0, 0, 1 ],\\&[ 0, 1, 0, 1, 0 ], [ 0, 1, 0, 1, 1 ], [ 0, 1, 1, 0, 0 ], [ 0, 1, 1, 0, 1 ],[ 1, 0, 0, 0, 0 ],\\&[ 1, 0, 0, 1, 1 ],[ 1, 0, 1, 0, 1 ], [ 1, 0, 1, 1, 1 ], [ 1, 1, 0, 0, 0 ], [ 1, 1, 0, 0, 1 ]\}.
		\end{align*}
		Observe that $n$ is even and that $|B(F)|=2^4-1=15$, so the $31-15=16$ unbalanced components have to be bent. Note that all nontrivial components are strictly quadratic.
	\end{example}
	
	Since the two previous examples do not have nontrivial constant components, they cannot achieve the upper bound in Corollary \ref{balanced-components}.
	

	\subsection{Special Case for embeddings}
	In this subsection, we consider two special cases, where we see that the bound of Corollary~\ref{balanced-components} can be actually tight.
	In the first case we restrict to $m=n+1$. The second case considers a situation where we see a behaviour in the embedding similar to that of an affine map. 
	
	We now examine the first case.

	\begin{proposition}\label{no-constants}
		Let a vBf $F$ from $\mathbb{F}_2^n$ to $\mathbb{F}_2^{n+1}$ be an embedding. Then
		\begin{itemize}
			\item[(i)] either there is one and only one constant nontrivial component and the other nontrivial components are balanced, or
			\item[(ii)] there is no constant nontrivial component.  
		\end{itemize}
	\end{proposition}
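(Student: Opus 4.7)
The plan is to study the set $C = \{\lambda \in \mathbb{F}^{n+1} : F_\lambda \text{ is constant}\}$. First I would observe that $C$ is an $\mathbb{F}$-subspace of $\mathbb{F}^{n+1}$: if $F_{\mu_1}$ and $F_{\mu_2}$ are constants $c_1$ and $c_2$, then $F_{\mu_1+\mu_2}(x) = \mu_1 \cdot F(x) + \mu_2 \cdot F(x) = c_1 + c_2$ for every $x$, and of course $0_{n+1} \in C$ since $F_{0_{n+1}} = 0$.

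Next I would exploit the embedding hypothesis to bound $\dim C$. Take a basis $\mu_1, \ldots, \mu_k$ of $C$ with $k \geq 1$ (the case $C = \{0_{n+1}\}$ is alternative (ii), so we would be done). Each constraint $\mu_i \cdot F(x) = c_i$ places ${\rm Im}(F)$ inside the affine hyperplane $\{y \in \mathbb{F}^{n+1} : \mu_i \cdot y = c_i\}$, so ${\rm Im}(F)$ lies in an affine subspace of codimension $k$, hence of cardinality $2^{n+1-k}$. Since $F$ is injective, $|{\rm Im}(F)| = 2^n$, forcing $k \leq 1$. Thus $\dim C \in \{0,1\}$: either there is no nontrivial constant component (case (ii)), or $C = \{0_{n+1}, \mu\}$ for a unique nonzero $\mu$, in which case $F_\mu$ is the sole nontrivial constant component (case (i)).

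To finish case (i) I still need every remaining nontrivial component to be balanced. This follows immediately from Corollary \ref{balanced-components} applied with $m = n+1$ and $2^{m-n} = 2$: the hypotheses ``$F$ is an embedding'' and ``$F$ has exactly $2$ constant components $F_{0_{n+1}}$ and $F_\mu$'' match the equality case of that corollary, so $F$ must possess exactly $2^{n+1}-2$ balanced components; together with the $2$ constants these exhaust all $2^{n+1}$ components of $F$.

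The principal step is the dimension inequality $\dim C \leq 1$; everything after that is a direct invocation of the equality case of Corollary \ref{balanced-components}. I do not anticipate any serious obstacle, as the image-size argument is short and tight, and the proof is essentially a linear-algebraic observation combined with a previously established bound.
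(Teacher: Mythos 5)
Your proof is correct. The first half---showing that the set $C$ of directions giving constant components is a subspace and that injectivity forces $\dim C\leq 1$ via the image being trapped in an affine subspace of codimension $\dim C$---is the same idea as the paper's, which phrases it as an affine change of output coordinates sending the constant components to zero coordinates and then bounding $|{\rm Im}(F)|\leq 2^{n-t+1}$. Where you diverge is the $\dim C=1$ case: to get balancedness of the remaining nontrivial components, you invoke the ``if'' direction of the equality condition in Corollary~\ref{balanced-components} (an embedding with exactly $2^{m-n}=2$ constant components attains the maximum $2^{n+1}-2$ balanced components, which by counting forces every non-constant nontrivial component to be balanced). That direction is indeed part of the stated biconditional and is supported by its proof, so the step is legitimate. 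The paper instead argues directly: after the change of output, $F=(f_1,\ldots,f_n,0)$, injectivity makes $(f_1,\ldots,f_n)$ a permutation of $\mathbb{F}^n$, and the components of $F$ indexed by $(\bar\lambda,0)$ and $(\bar\lambda,1)$ coincide with the (balanced) nontrivial components of that permutation. The paper's route is more elementary and self-contained, making visible the underlying permutation structure; yours is shorter and reuses the general machinery already established, at the cost of leaning on the precise ``exactly $2^{m-n}$ constants'' reading of the corollary's equality case.
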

	
	\begin{proof}
		Since $F$ is embedding and $1\le |C(F)|\le 2^{n+1-n}=2$, then we only have two cases.
		
		If $|C(F)|=2$, we have one and only one constant nontrivial component, while the remaining nontrivial components must be balanced by Corollary \ref{constant-components}.
		
		If $|C(F)|=1$, $C(F)=\{0_{n+1}\}$ and $F$ does not have any constant nontrivial component.
	\end{proof}
	
	Finally, we examine a special case where the image of an embedding \(F\) from \(\mathbb{F}_2^n\) to \(\mathbb{F}_2^m\) forms a subspace of \(\mathbb{F}_2^m\). We present the result in the following.

	\begin{theorem}\label{balanced-ImageSpace}
		Let a vBf $F$ from $\mathbb{F}_2^n$ to $\mathbb{F}_2^m$ be an embedding, with $m\geq n$, and ${\rm Im}(F)$ be an affine subspace of $\mathbb{F}_2^m$. Then, 
		\begin{itemize}
			\item[(i)] for all $\lambda\in\mathbb{F}_2^m$, $F_\lambda$ is either constant or balanced. 
			\item[(ii)] $F$ has precisely $2^m-2^{m-n}$ balanced components and $2^{m-n}$ constant components.
		\end{itemize}
		
	\end{theorem}

	\begin{proof}
		Let $J=\{(v_1,\ldots, v_n,0,\ldots, 0) : v_i \in \mathbb{F}_2\}\subset \mathbb{F}_2^m$. By elementary linear algebra, ${\rm Im}(F)$ is an affine subspace of $\mathbb{F}_2^m$ if and only if there is an affinity $\psi: \mathbb{F}_2^m\to \mathbb{F}_2^m$ such that ${\rm Im}(F)=\psi(J)$. So if we consider $F'=\psi^{-1}\circ F$, $F': \mathbb{F}_2^n\to \mathbb{F}_2^m$, ${\rm Im}(F')=J$. Observe that we can choose $\psi$ so that $F'(x_1,\ldots,x_n)=(x_1,\ldots, x_n,0\ldots, 0)$. With this choice, $F'_\lambda$ is a linear combination of coordinate functions which are either linear or constant, so $|B(F')|+|C(F')|=2^m$, the same holds for $F_\lambda$ and we have (i).
		
		If $F'_\lambda$ is a linear combination of the last $m-n$ coordinate functions, then $F'_\lambda$ is constant and so $|C(F')|\ge 2^{m-n}$. By Corollary \ref{constant-components}, we have (ii).
	\end{proof}
	
	
	\section{Applications to cubic APN functions}\label{appl}
	Let $F$ be a vBF from $\mathbb{F}_2^n$ to itself. Let $I={(v_1,\ldots,v_{n-1},0)}$. We say that $F$ is an APN if for any $a\ne 0_n$ in $\mathbb{F}_2^n$, $x\ne y,y+a$, $F(x)+F(x+a)\not=F(y)+F(y+a)$ \cite[p.137]{Car1}.
	We can suppose w.l.o.g. that $a \not\in I$. Then $\mathbb{F}_2^n=I\oplus(I+a)$, and $D_aF(x)=D_aF(x+a)$. 
	\begin{definition}
		Let $F$ be a vBF from $\mathbb{F}_2^n$ to itself. Let $I_0={(v_1,\ldots,v_{n-1},0)}$ and $I_1={(v_1,\ldots,v_{n-1},1)}$. Let $a\in\mathbb{F}_2^n$, $a\ne 0_n$.  We define $\mathcal{D}_a(F): \mathbb{F}_2^{n-1}\to \mathbb{F}_2^n$ as $\mathcal{D}_a(F)(v)=D_aF(v,0)$ if $a\notin I_0$, as $\mathcal{D}_a(F)(v)=D_aF(v,1)$ if $a\notin I_1$.
	\end{definition} 
	Clearly, $F$ is APN if and only if $\mathcal{D}_a(F)$ is an embedding for any $a\ne 0_n$. Observe that if $F$ is cubic APN, then $\mathcal{D}_a(F)$ is quadratic/partially-bent embedding, for any $a\ne 0_n$.  Hence, by Theorem \ref{least-balanced-component}, the following is deduced.
	
	\begin{corollary}
		Let $F$ be a cubic APN function from $\mathbb{F}_2^n$ to itself. Then 
		\begin{itemize}
			\item[(i)]  $|B(\mathcal{D}_a(F))|\geq 2^{n-1}-1$, for $n$ odd and equality holds if and only if all the other nontrivial components (i.e., the unbalanced components) are bent,
			\item[(ii)] $|B(\mathcal{D}_a(F))|\geq 3\cdot 2^{n-2}-1$, for $n$ even and equality holds if and only if all the other nontrivial components (i.e., the unbalanced components) are unbalanced semi-bent.
		\end{itemize}
	\end{corollary}

\end{document}